\title{\LARGE \bf
Minimal L2-Consistent Data-Transmission
}
\author{Antoine Aspeel, Laurent Bako, and Necmiye Ozay
\thanks{This work is funded by the ONR grant N00014-21-1-2431 (CLEVR-AI).}
\thanks{A. A. and N. O. are with the Electrical Engineering and Computer Science Department, Univ. of Michigan, Ann Arbor, MI {\tt\small \{antoinas,necmiye\}@umich.edu}}%
\thanks{L. B. is with Ecole Centrale de Lyon, INSA Lyon, Université Claude Bernard Lyon 1, CNRS, Ampère, UMR 5005, 69130 Ecully, France.
        {\tt\small laurent.bako@ec-lyon.fr}}%
}
\begin{document}

\maketitle
\thispagestyle{empty}
\pagestyle{empty}

\begin{abstract} In this work, we consider non-collocated sensors and actuators, and we address the problem of minimizing the number of sensor-to-actuator transmissions while ensuring that the L2 gain of the system remains under a threshold. By using causal factorization and system level synthesis, we reformulate this problem as a rank minimization problem over a convex set. When heuristics like nuclear norm minimization are used for rank minimization, the resulting matrix is only numerically low rank and must be truncated, which can lead to an infeasible solution. To address this issue, we introduce approximate causal factorization to control the factorization error and provide a bound on the degradation of the L2 gain in terms of the factorization error. The effectiveness of our method is demonstrated using a benchmark.
\end{abstract}

\section{INTRODUCTION}

Modern real-world systems are often made up of several components interconnected via a communication network. This work considers networked control systems where sensors and actuators are physically distant and need to communicate through a network to implement a control law. Examples of relevant applications include smart building heating systems, disaster relief operation systems, and drone control in a motion capture arena. In all these scenarios, the sensors information must be communicated to the actuators in order to implement a feedback law. It is therefore desired to optimize the number of transmissions while maintaining an acceptable level of control performance.

More precisely, this work focuses on the case of sensors and actuators that are not collocated, and seek to minimize the number of transmissions from sensors to actuators, while keeping the L2 gain between the external disturbances and the system state under a given threshold.

\subsubsection{Related works}

There has been a growing number of works addressing a variety of challenges related to computation, sampling and communication efficiency for resource-constrained systems. In the last decade, the event-triggered approach has been proposed as a way to minimize the use of actuators \cite{heemels2012introduction}. These methods incorporate a triggering mechanism which determines appropriate actuation or transmission times by monitoring the state of the system. The impact of such triggering mechanisms on the L2 gain is studied in \cite{strijbosch2020mathcal, luo2022stability, wu2021l2}.

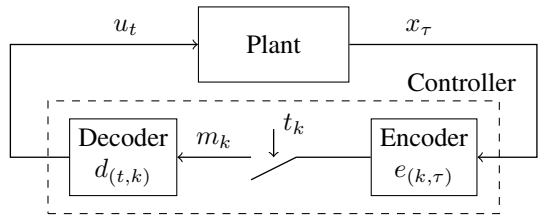
\begin{figure}[ht]
\centering
\begin{tikzpicture}[scale=1]
\centering
\def\boxHight{1cm}
\def\plantWidth{2cm}
\def\xlim{3.5cm}
\def\encCenter{2cm}
\def\encWidth{1cm}
\def\ylim{1.5cm}
\def\KHight{1.5cm}
\def\KWidth{6.0cm}
\node [rectangle, draw, minimum width=\plantWidth, minimum height=\boxHight, align=center] (plant) at (0,0) {Plant}; 
\node [rectangle, draw, minimum width=\encWidth, minimum height=\boxHight, align=center] (encoder) at (\encCenter,-\ylim) {Encoder\\ $e_{(k,\tau)}$}; 
\node [rectangle, draw, minimum width=\encWidth, minimum height=\boxHight, align=center] (decoder) at (-\encCenter,-\ylim) {Decoder\\ $d_{(t,k)}$}; 
\node [dashed, rectangle, draw, minimum width=\KWidth, minimum height=\KHight] (K) at (0,-\ylim) {}; 

\draw [line width=0.5] (encoder) -- (0.3cm,-\ylim) -- (-0.3cm,-\ylim-0.3cm); 
\draw [->,line width=0.5] (-0.3cm,-\ylim) -- node[above]{$m_k$} (decoder); 
\draw [->,line width=0.5] (0,-\ylim+\ylim/4) -- node[above right]{$t_k$} (0,-\ylim); 
\draw [->,,line width=0.5] (plant) -- node[above left]{$x_\tau$} (\xlim,0) -- (\xlim,-\ylim) -- (encoder); 
\draw [->,,line width=0.5] (decoder) -- (-\xlim,-\ylim) -- (-\xlim,0) -- node[above right]{$u_t$} (plant); 

\draw (\encCenter+0.5cm,-\ylim+\KHight/2+0.25cm) node{Controller};
\end{tikzpicture}
\vspace{-0cm}
\caption{Encoder-decoder structure of the controller.}
\vspace{-.1cm}
\label{fig:block_diagram}
\end{figure}

The concept of L2-consistent data transmission is introduced and studied in \cite{balaghi20192, balaghiinaloo2021l2, antunes2019consistent}. A sequence of transmission times is said to be L2-consistent if it performs at least as well as a given reference periodic transmission scenario in terms of both L2 gain and number of transmissions. The finite horizon setting is considered in \cite{balaghi20192} and an event-triggered extension is proposed in \cite{balaghiinaloo2021l2, antunes2019consistent} for infinite horizons. In both cases, a method based on dynamic games is proposed to find L2-consistent transmission times.

All these works consider that the controller is either (i) on the sensor side, i.e., the control input is computed on the sensor side and sent to the actuator when the triggering condition is met; or (ii) on the actuator side, i.e., the state is sent to the actuator  where the control input is computed. In contrast, a key distinguishing feature of our approach is that the controller is split in two parts (an encoder and a decoder), allowing for computations on both the sensor and the actuator sides. We show that this controller structure allows to reduce the number of transmissions even further.

The idea of encoder-decoder structure that splits a controller into a sensor-side and an actuator-side controller appears in \cite{braksmayer2017redesign, braksmayer2019discrete}, where a method to approximate a linear controller to accommodate given sensor-to-actuator transmission times is proposed. This structure is also considered in our previous work \cite{aspeel2023low} in the context of safety control, where we show that the number of sensor-to-actuator transmissions is given by the rank of the matrix mapping the sequence of measurements to the sequence of control inputs. Consequently, minimizing the  number of transmissions reduces to a rank minimization problem. Finally, it is shown in \cite{aspeel2023low} that the encoder and decoder are given by the causal factorization of that matrix. Also somewhat related, in \cite{cho2023low}, several distributed agents implement a low-rank time-invariant feedback gain to minimize a LQR cost through limited broadcast communication. The goal in that paper is not to minimize the number of transmission times but rather to send compact messages at each time.

\subsubsection{Contributions}

We consider a discrete-time linear system over a finite horizon, and we address the problem of finding a linear controller with memory that can be implemented with a minimum number of sensor-to-actuator transmissions, while keeping the L2 gain below a given threshold.

Our first contribution is to prove that an optimal transmission schedule and a controller can be co-designed by solving a rank minimization problem over a convex set, where the rank gives the number of transmissions. Then, we use causal factorization introduced in \cite{aspeel2023low} to obtain the controller implementation.  

When heuristics are used for rank minimization, they typically lead to solutions that are full rank but have several small singular values. If used ``as is”, such solutions would lead to a number of transmissions equal to their exact rank which is likely to be large. To solve this issue, our second contribution is to introduce the notion of \emph{approximate causal factorization} and an algorithm to compute it. Finally, our third contribution is to bound the deterioration of the L2 gain as a function of the factorization error. Finally, we show that our method results in fewer transmissions compared to a benchmark method.

\subsubsection{Notation}
The identity matrix in $\R^{n\times n}$ is denoted $I_n$, or $I$ when the dimension is clear from the context. For a matrix $\mathbf{X}$, $\|\X\|$ is its induced 2-norm. The Kronecker product between two matrices $\mathbf{X}$ and $\mathbf{Y}$ is denoted $\mathbf{X}\otimes\mathbf{Y}$. For a matrix $\X$, $\X_{i,:}$ denotes the $i$-th row of $\X$, and $\X_{i:j,:}$ denotes the submatrix formed by its rows indexed from $i$ to $j$. For columns, the notations $\X_{:,i}$ and $\X_{:,i:j}$ are defined similarly. A matrix $\X\in\R^{Tm\times Tn}$ is $(m,n)$-block-lower-triangular if $\X_{tm+1:(t+1)m,\tau n+1:(\tau+1)n}=0$ for all $0\leq t<\tau \leq T-1$.

\section{PROBLEM STATEMENT}

Consider the system dynamics
\begin{align}\label{eq:dynamics}
x_{t+1}=Ax_t+Bu_t+Dw_t
\end{align}
over a finite horizon $T$ with state $x_t\in\R^{n_x}$, control input $u_t\in\R^{n_u}$, process noise $w_t\in\R^{n_w}$, and unknown initial state $x_0$.

To reduce the number of sensor-to-actuator transmissions, we want to design a controller that can be implemented via an encoder-decoder structure as proposed in \cite{aspeel2023low} (see Fig.~\ref{fig:block_diagram}).

On the sensor side, a linear encoder with a memory computes messages $m_k$ based on the previous states $x_\tau$. The message $m_k$ is transmitted at time $t_k$ to the decoder located on the actuator side. The linear decoder computes control inputs $u_t$ based on the messages previously received. For $r$ transmission times
$\{t_k\}_{k=1}^r$ such that $0\leq t_1 \leq t_2 \leq \dots \leq t_r \leq T$, the controller structure is written
\begin{align}\label{eq:controller}
m_k = \sum_{\tau\leq t_k} e^\top_{(k,\tau)}x_\tau\text{ and }
u_t = \sum_{k \text{ s.t. }t_k\leq t} d_{(t,k)}m_k,
\end{align}
where each message $m_k\in\R$ is a real number. This message encodes information from states through the vectors $e_{(k,\tau)}\in\R^{n_x}$. Each input $u_t$ is computed by decoding the messages using the vectors $d_{(t,k)}\in\R^{n_u}$. Importantly, this controller is causal since every message is transmitted (i) after the states it encodes have been measured, but (ii) before being used to compute an input. The number of transmissions required to implement this controller is the number $r$ of transmitted messages.

We are interested in finding a controller that leads to an L2 gain smaller than a given $\gamma\geq0$, that is
\begin{align}\label{eq:l2:def}
\sum_{t=0}^{T} x_t^\top Q x_t + u_t^\top R u_t \leq \gamma^2 \sum_{t=0}^{T}w_t^\top w_t, \text{ for all } x_0,\  w_t,
\end{align}
with the matrices $Q$ and $R$ being symmetric positive semidefinite. Subject to this gain constraint, we seek to find a controller that minimizes the number $r$ of transmissions between the encoder/decoder parts.

\begin{problem}\label{prob}
Find the minimal $r$ such that there exist $\{t_k\}_{k=1}^r$, $\{e_{(k,\tau)}\}_{k=1,\dots,r}^{\tau=0,\dots,t_k}$ and $\{d_{(t,k)}\}_{k=1,\dots,r}^{t=t_k,\dots,T}$ satisfying \eqref{eq:dynamics}, \eqref{eq:controller} and \eqref{eq:l2:def}.
\end{problem}

\begin{remark}
In \cite{balaghi20192}, a sequence of transmission times is said to be \emph{L2-consistent for a period $p$} if compared to the $p$-periodic transmission, it leads to (i) a not larger L2 gain, and (ii) no more transmissions. Note that no encoder is considered in that work: when a transmission occurs, the state $x_t$ is transmitted, which corresponds to $n_x$ messages. If $\gamma_p$ is the minimum L2 gain achievable with $p$-periodic transmissions, then solving Problem~\ref{prob} with $\gamma=\gamma_p$ in \eqref{eq:l2:def} leads to L2-consistent transmission times for the period $p$.
\end{remark}

\section{EQUIVALENCE WITH RANK MINIMIZATION}

In this section, we show that using techniques from~\cite{aspeel2023low}, Problem~\ref{prob} can be reduced to a rank minimization over a convex set. We note that while~\cite{aspeel2023low} considers safety over bounded disturbances, we are interested in the case of a bound on the L2 gain. In Subsection~\ref{sec:causal_factorization}, Problem~\ref{prob} is reduced to a rank minimization problem over a non-convex set; then, in Subsection~\ref{sec:SLS}, the system level synthesis framework~\cite{anderson2019system} is used to make the constraints convex.

\subsection{Causal factorization}\label{sec:causal_factorization}
Let us introduce the following notation: $\mathbf{u}\coloneqq\begin{bmatrix} u_0^\top & \dots & u_T^\top\end{bmatrix}^\top$, $\mathbf{x}\coloneqq\begin{bmatrix} x_0^\top & \dots & x_T^\top\end{bmatrix}^\top$, $\mathbf{m}\coloneqq\begin{bmatrix} m_1 & \dots & m_r \end{bmatrix}^\top$ and
\begin{align*}
    &\mathbf{D} \coloneqq\begin{bmatrix}
    d_{(0,1)} & \cdots & d_{(0,r)} \\
    \vdots & & \vdots \\
    d_{(T,1)} & \cdots & d_{(T,r)}
    \end{bmatrix},\ 
    \mathbf{E} \coloneqq\begin{bmatrix}
    e^\top_{(1,0)} & \cdots & e^\top_{(1,T)} \\
    \vdots & & \vdots \\
    e^\top_{(r,0)} & \cdots & e^\top_{(r,T)}
    \end{bmatrix},
\end{align*}
with $d_{(t,k)}\coloneqq0$ when $t<t_k$, and $e_{(k,\tau)}\coloneqq0$ when $t_k<\tau$. With these notations, the controller structure \eqref{eq:controller} can be written as
\begin{align}\label{eq:controller:DE}
\mathbf{m=Ex}\text{ and }\mathbf{u=Dm}.
\end{align}
One can optimize over the matrices $\D$ and $\E$ instead of $e_{(k,\tau)}$ and $d_{(t,k)}$ as long as the pair $(\D,\E)$ satisfies the following constraint:
\begin{constraint}[Causality]\label{constraint:causality}
There exist $0\leq t_1 \leq t_2 \leq \dots \leq t_r \leq T$ such that $\mathbf{E}_{k,\tau n_x+j} = \mathbf{D}_{tn_u+i,k} = 0$ for all $k=1,\dots,r$, for all $t<t_k<\tau$, for all $i=1,\dots,n_u$, and for all $j=1,\dots,n_x$.
\end{constraint}

This constraint ensures the causality of the controller: the message $m_k$ transmitted at time $t_k$ does not encode states $x_\tau$ received after $t_k$, and the control input $u_t$ does not depend on messages transmitted after $t$. This constraint is generally hard to enforce because of the ``there exist" quantification on the transmission times $t_k$.

Abstracting the encoder and the decoder, the controller structure \eqref{eq:controller:DE} can be written as
\begin{align} \label{eq:controller:K}
u_t=\sum_{\tau\leq t}K_{(t,\tau)}x_\tau,
\end{align}
with $K_{(t,\tau)}=\sum_{k\text{ s.t. }\tau\leq t_k\leq t} d_{(t,k)}e^\top_{(k,\tau)}$. Writing
\begin{align} \label{eq:K:block}
    &\K\coloneqq\begin{bmatrix}
    K_{(0,0)} & & & \\
    K_{(1,0)} & K_{(1,1)} & & \\
    \vdots & \ddots & \ddots & \\
    K_{(T,0)} & \hdots & K_{(T,T-1)} & K_{(T,T)}
    \end{bmatrix},
\end{align}
leads to the controller $\mathbf{u=Kx}$ where $\K=\D\E$ is $(n_u,n_x)$-block-lower triangular. The matrices $\D$ and $\E$ can be recovered from $\K$ via its causal factorization.

\begin{definition}[Causal factorization \cite{aspeel2023low}]\label{def:causal_facto}
Let $\X\in\R^{(T+1)n_u\times (T+1)n_x}$ be a $(n_u,n_x)$-block-lower triangular matrix. A pair of matrices $(\D,\E)\in\R^{(T+1)n_u\times r}\times \R^{r \times (T+1)n_x}$ is a \emph{causal factorization of $\X$ with band $r$} if $\X = \D\E$ and Constraint \ref{constraint:causality} (Causality) holds.
\end{definition}

Because of the condition $\mathbf{X=DE}$, a causal factorization of $\X$ can not have a band smaller than $\rank~\X$. In \cite{aspeel2023low}, it is proven that there is always a causal factorization with band exactly $\rank~\X$ and an algorithm is provided to compute it. As a consequence, the problem of minimizing the number of messages reduces to a rank minimization problem followed by the computation of a causal factorization.

\begin{lemma}\label{thm:prob:K}
Optimal $r$, $\{t_k\}_{k=1}^r$, $\{e_{(k,\tau)}\}_{k=1,\dots,r}^{\tau=0,\dots,t_k}$ and $\{d_{(t,k)}\}_{k=1,\dots,r}^{t=t_k,\dots,T}$ for Problem~\ref{prob} are obtained by finding an optimal $\K^*$ for
\begin{align}\label{eq:prob:K}
\min_{\K} \rank \K \text{ s.t. \eqref{eq:dynamics}, \eqref{eq:l2:def}, \eqref{eq:controller:K}, \eqref{eq:K:block}},
\end{align}
and computing a causal factorization of $\K^*$ with band equal to $\rank \K^*$.
\end{lemma}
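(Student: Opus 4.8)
The plan is to establish the equivalence between Problem~\ref{prob} and the rank minimization \eqref{eq:prob:K} by showing that the two optimization problems have the same feasible set (modulo the encoder-decoder versus $\K$ representation) and the same objective value, so that an optimal solution of one yields an optimal solution of the other. The strategy is to prove this via a two-directional argument relating the controller parameters $\{t_k\}, \{e_{(k,\tau)}\}, \{d_{(t,k)}\}$ to the matrix $\K$.

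First I would handle the forward direction: given any feasible controller for Problem~\ref{prob} with $r$ transmissions, assemble the matrices $\D$ and $\E$ as defined before the lemma (padding with zeros per the conventions $d_{(t,k)}=0$ for $t<t_k$ and $e_{(k,\tau)}=0$ for $t_k<\tau$). By construction $(\D,\E)$ satisfies Constraint~\ref{constraint:causality}, and setting $\K=\D\E$ gives, via the computation $K_{(t,\tau)}=\sum_{k:\tau\leq t_k\leq t} d_{(t,k)}e^\top_{(k,\tau)}$ already recorded in \eqref{eq:controller:K}, a $(n_u,n_x)$-block-lower-triangular matrix satisfying \eqref{eq:K:block}. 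Since $\mathbf{u=Kx}$ reproduces exactly the same input-output map as \eqref{eq:controller:DE}, this $\K$ is feasible for \eqref{eq:prob:K} and satisfies \eqref{eq:l2:def}. Moreover $\rank\K \leq r$ because $\K$ factors through the inner dimension $r$.

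For the reverse direction, I would take any feasible $\K$ for \eqref{eq:prob:K} and invoke the result cited from \cite{aspeel2023low}: every block-lower-triangular $\K$ admits a causal factorization $(\D,\E)$ with band exactly $\rank\K$. Reading off the encoder and decoder entries from the columns of $\D$ and rows of $\E$, together with the transmission times $t_k$ whose existence is guaranteed by Constraint~\ref{constraint:causality}, recovers a feasible controller for Problem~\ref{prob} using exactly $r=\rank\K$ transmissions and realizing the same control law, hence the same L2 gain. Combining both directions, the minimal number of transmissions in Problem~\ref{prob} equals $\min_\K \rank\K$ over the feasible set of \eqref{eq:prob:K}, and an optimal $\K^*$ together with its causal factorization yields the optimal controller, which is the claim.

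The main obstacle I anticipate is the careful bookkeeping in the forward direction to verify that the assembled $(\D,\E)$ genuinely satisfies Constraint~\ref{constraint:causality} for the \emph{same} transmission times $t_k$ as the original controller, and that the zero-padding conventions make the matrix products consistent with \eqref{eq:K:block}; the deeper content of the reverse direction is entirely delegated to the causal-factorization existence theorem of \cite{aspeel2023low}, so the argument is mostly a matter of matching the two parameterizations rather than new analysis.
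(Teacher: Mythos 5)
Your proposal is correct and matches the substance of the paper's argument: the paper simply cites the analogous proof of Corollary~1 in \cite{aspeel2023low}, and that proof is precisely the two-directional correspondence you spell out (any $r$-transmission controller yields a feasible $\K$ with $\rank\K\leq r$; any feasible $\K$ yields a controller with exactly $\rank\K$ transmissions via the causal-factorization existence result). You have essentially written out explicitly what the paper delegates to the reference, with no gaps.
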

\begin{proof}
The proof is analogous to the proof of \cite[Corollary~1]{aspeel2023low}. Equations \eqref{eq:dynamics}, \eqref{eq:l2:def}, \eqref{eq:controller:K}, \eqref{eq:K:block} correspond to (1), Constraint~1, (4) and (5) in~\cite{aspeel2023low}, respectively.
\end{proof}

\subsection{System level synthesis}\label{sec:SLS}

We note that the L2 constraint \eqref{eq:l2:def} in problem~\eqref{eq:prob:K} is not convex in $\K$. In this subsection, we show that system level synthesis (SLS) can be used to rewrite the optimization problem~\eqref{eq:prob:K} as a rank minimization over a convex set.
For this purpose, let us introduce the vector $\mathbf{w}=\begin{bmatrix}x_0^\top&w_0^\top & \dots & w_{T-1}^\top \end{bmatrix}^\top$ and the matrices $\mathcal{A}\coloneqq I_{T+1}\otimes A$, $\mathcal{B}\coloneqq I_{T+1}\otimes B$, and $\mathcal{D}\coloneqq \text{blkdiag}(I_{n_x}, I_{T}\otimes D)$, where $\text{blkdiag}$ indicates a block-diagonal concatenation. Finally, let $Z\in\R^{(T+1)n_x\times(T+1)n_x}$ be the block-downshift operator, i.e., the $(n_x,n_x)$-block-lower triangular matrix with identities on its first block subdiagonal and zeros elsewhere. This allows us to write
\begin{align}\label{eq:system_response}
\begin{bmatrix} \mathbf{x} \\ \mathbf{u} \end{bmatrix} = \begin{bmatrix}\Phi_x \\ \Phi_u\end{bmatrix}\mathcal{D}\mathbf{w},
\end{align}
where $\Phi_x=(I-Z(\mathcal{A}+\mathcal{B}\K))^{-1}$ and $\Phi_u=\K(I-Z(\mathcal{A}+\mathcal{B}\K))^{-1}$ are the system responses. \\
The following result forms the core of SLS.
\begin{lemma}[{\cite[Theorem 2.1]{anderson2019system}}]\label{thm:SLS}
Over the horizon $t=0,\dots,T$, the system dynamics \eqref{eq:dynamics} with $(n_u,n_x)$-block-lower triangular state feedback law $\K$ defining the control action as $\mathbf{u=Kx}$, the following are true:
\begin{enumerate}
\item the affine subspace defined by
\begin{align}\label{eq:SLC}
\begin{bmatrix}I-Z\mathcal{A} & -Z\mathcal{B}\end{bmatrix}\begin{bmatrix}\Phi_x \\ \Phi_u\end{bmatrix} = I,
\end{align}
parameterizes all possible system responses \eqref{eq:system_response}.

\item for any $(n_x,n_x)$- and $(n_u,n_x)$-block-lower triangular matrices $\{\Phi_x,\Phi_u\}$ satisfying \eqref{eq:SLC}, the controller $\K=\Phi_u\Phi_x^{-1}$ achieves the desired system response \eqref{eq:system_response}. 
\end{enumerate}
\end{lemma}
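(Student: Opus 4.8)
The plan is to prove both claims by lifting the dynamics over the horizon and exploiting the nilpotency of the block-downshift operator $Z$. First I would rewrite \eqref{eq:dynamics} in the stacked form $\begin{bmatrix} I-Z\mathcal{A} & -Z\mathcal{B}\end{bmatrix}\begin{bmatrix}\mathbf{x}\\\mathbf{u}\end{bmatrix}=\mathcal{D}\mathbf{w}$, which can be checked blockwise: for $t\geq 1$ the $t$-th block row reproduces $x_t=Ax_{t-1}+Bu_{t-1}+Dw_{t-1}$, while the top block row returns $x_0$. Since $Z$ is strictly block-lower-triangular it is nilpotent, and because $\K$ is $(n_u,n_x)$-block-lower-triangular the matrix $\mathcal{A}+\mathcal{B}\K$ is block-lower-triangular, so $Z(\mathcal{A}+\mathcal{B}\K)$ is strictly block-lower-triangular and hence nilpotent. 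Consequently $I-Z(\mathcal{A}+\mathcal{B}\K)$ is always invertible, and its inverse $\Phi_x$ is block-lower-triangular with identity blocks on the diagonal.

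For claim (1), I would verify that the responses $\Phi_x=(I-Z(\mathcal{A}+\mathcal{B}\K))^{-1}$ and $\Phi_u=\K\Phi_x$ generated by any such $\K$ satisfy \eqref{eq:SLC}: substituting $\Phi_u=\K\Phi_x$ gives $(I-Z\mathcal{A})\Phi_x-Z\mathcal{B}\K\Phi_x=(I-Z(\mathcal{A}+\mathcal{B}\K))\Phi_x=I$. Conversely, combining the lifted dynamics with $\mathbf{u=Kx}$ yields exactly the response \eqref{eq:system_response}, so every achievable system response lies in the affine set cut out by \eqref{eq:SLC}.

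For claim (2), given any block-lower-triangular $\{\Phi_x,\Phi_u\}$ satisfying \eqref{eq:SLC}, I would first inspect the block diagonal of the constraint. Because $Z\mathcal{A}$ and $Z\mathcal{B}$ are strictly block-lower-triangular, the products $Z\mathcal{A}\Phi_x$ and $Z\mathcal{B}\Phi_u$ have zero block diagonal, so the block diagonal of $\Phi_x$ must equal that of $I$; hence $\Phi_x$ has identity diagonal blocks and is invertible with $\Phi_x^{-1}$ block-lower-triangular. This makes $\K\coloneqq\Phi_u\Phi_x^{-1}$ well defined and block-lower-triangular, i.e., a valid causal feedback law. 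Finally, substituting $\Phi_u=\K\Phi_x$ back into \eqref{eq:SLC} gives $(I-Z(\mathcal{A}+\mathcal{B}\K))\Phi_x=I$, which identifies $\Phi_x=(I-Z(\mathcal{A}+\mathcal{B}\K))^{-1}$ and $\Phi_u=\K\Phi_x$ as exactly the responses realized by $\K$, completing the equivalence.

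The main obstacle is the converse invertibility and causality argument: one must show not merely that a factorization $\Phi_u\Phi_x^{-1}$ formally exists, but that $\Phi_x$ is genuinely invertible and that the recovered $\K$ is again block-lower-triangular. Both facts rest on the same structural observation, namely that the nilpotency of $Z$ together with the affine constraint \eqref{eq:SLC} forces identity diagonal blocks on $\Phi_x$; once this is established, the remaining steps are purely algebraic substitutions.
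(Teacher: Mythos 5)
Your proof is correct and is essentially the standard argument for the finite-horizon state-feedback SLS parameterization: the paper itself does not prove this lemma but cites \cite[Theorem 2.1]{anderson2019system}, whose proof proceeds exactly as you do, via the stacked dynamics, nilpotency of $Z(\mathcal{A}+\mathcal{B}\K)$, and the observation that \eqref{eq:SLC} forces identity diagonal blocks on $\Phi_x$. The key invertibility fact you establish is also precisely the one the paper invokes later in the proof of Theorem~\ref{thm:min_rank_SLS}.
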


Introducing the matrices $\mathcal{Q}\coloneqq I_T\otimes Q$ and $\mathcal{R}\coloneqq I_T\otimes R$, SLS allows to rewrite the constraint \eqref{eq:l2:def} on the L2 gain as
\begin{align}\label{eq:l2:sls}
\left\| \begin{bmatrix} \mathcal{Q}^{1/2} & 0 \\ 0 & \mathcal{R}^{1/2} \end{bmatrix} \begin{bmatrix}\Phi_x \\ \Phi_u\end{bmatrix}\mathcal{D} \right\|\leq \gamma,
\end{align}
which is convex in $\{\Phi_x,\Phi_u\}$.

The following result shows that minimizing the rank of $\K$ can be achieved using SLS by minimizing the rank of $\Phi_u$. This is our first main contribution.

\begin{theorem}\label{thm:min_rank_SLS}
Optimal $r$, $\{t_k\}_{k=1}^r$, $\{e_{(k,\tau)}\}_{k=1,\dots,r}^{\tau=0,\dots,t_k}$ and $\{d_{(t,k)}\}_{k=1,\dots,r}^{t=t_k,\dots,T}$ for Problem~\ref{prob} are obtained by finding optimal $\{\Phi_x,\Phi_u\}$ for
\begin{align}\label{eq:prob:SLS}
\min_{\Phi_x,\Phi_u} \rank \Phi_u &\text{ s.t. \eqref{eq:SLC}, \eqref{eq:l2:sls}, and}\notag\\
&\Phi_x,\ \Phi_u \text{ are block-lower triang.},
\end{align}
and computing a causal factorization of the matrix $\K\coloneq\Phi_u\Phi_x^{-1}$.
\end{theorem}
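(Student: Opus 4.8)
The plan is to chain Lemma~\ref{thm:prob:K} with the SLS parameterization of Lemma~\ref{thm:SLS}, and then to observe that the two rank objectives agree. First I would invoke Lemma~\ref{thm:prob:K}, which already reduces Problem~\ref{prob} to the rank minimization~\eqref{eq:prob:K} over $(n_u,n_x)$-block-lower-triangular gains $\K$ satisfying \eqref{eq:dynamics}, \eqref{eq:l2:def}, \eqref{eq:controller:K}, \eqref{eq:K:block}, followed by a causal factorization of the optimizer. It then suffices to show that \eqref{eq:prob:K} and \eqref{eq:prob:SLS} are equivalent: that their feasible sets are in bijection through $\K=\Phi_u\Phi_x^{-1}$, and that this bijection preserves the objective.

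For the feasible sets, I would apply both parts of Lemma~\ref{thm:SLS}. Part~(1) sends any admissible block-lower-triangular $\K$ to block-lower-triangular responses $\{\Phi_x,\Phi_u\}$ satisfying the affine constraint~\eqref{eq:SLC}; part~(2) sends any such $\{\Phi_x,\Phi_u\}$ back to the gain $\K=\Phi_u\Phi_x^{-1}$, realizing the same closed loop. Since under this correspondence the map~\eqref{eq:system_response} from $\mathbf{w}$ to $(\mathbf{x},\mathbf{u})$ is one and the same, and the L2 bound~\eqref{eq:l2:def} is an induced-norm bound on this map, the constraint~\eqref{eq:l2:def} is equivalent to~\eqref{eq:l2:sls}, exactly as already established in the text preceding~\eqref{eq:l2:sls}. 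Hence the two feasible sets coincide under the bijection.

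The crux of the proof is the objective. Here I would use $\Phi_u=\K\Phi_x$ from~\eqref{eq:system_response} together with the invertibility of $\Phi_x$. That $\Phi_x$ is invertible follows because it is block-lower-triangular with identity diagonal blocks: since the shift $Z$ is strictly block-lower-triangular and $\Phi_x,\Phi_u$ are block-lower-triangular, reading~\eqref{eq:SLC} on the block diagonal forces each diagonal block of $\Phi_x$ to equal $I$. Because right-multiplication by an invertible matrix preserves rank, I obtain $\rank\Phi_u=\rank(\K\Phi_x)=\rank\K$ for every feasible pair. Therefore \eqref{eq:prob:K} and \eqref{eq:prob:SLS} share the same optimal value, their optimizers correspond through $\K=\Phi_u\Phi_x^{-1}$, and applying the causal factorization of Lemma~\ref{thm:prob:K} to this $\K$ recovers the optimal transmission times and encoder/decoder vectors.

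I do not expect any individual step to be genuinely difficult, as the SLS correspondence and the L2 reformulation are supplied by the cited results. The single point requiring care is the rank identity $\rank\Phi_u=\rank\K$: its entire content rests on the invertibility of $\Phi_x$, which is easy to overlook but essential, since it is precisely what turns a rank objective on the product $\K$ into the convex-set rank objective on the variable $\Phi_u$.
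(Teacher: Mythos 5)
Your proposal is correct and follows essentially the same route as the paper: reduce via Lemma~\ref{thm:prob:K}, pass to the SLS parameterization via Lemma~\ref{thm:SLS}, and conclude with the rank identity $\rank\Phi_u=\rank\K$ justified by the invertibility of $\Phi_x$ (block-lower-triangular with identity diagonal blocks forced by~\eqref{eq:SLC}). The paper's own proof is a condensed version of exactly this argument, including the same key observation about $\Phi_x$.
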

\begin{proof}
From Lemma~\ref{thm:prob:K}, Problem~\ref{prob} can be solved by solving \eqref{eq:prob:K} for $\K$. Then, from Lemma~\ref{thm:SLS}, solving \eqref{eq:prob:K} is equivalent to solve $ \min_{\Phi_x,\Phi_u} \rank \left(\Phi_u\Phi_x^{-1}\right) \text{ s.t. \eqref{eq:SLC}, \eqref{eq:l2:sls}}$, for block-lower triangular $\{\Phi_x,\Phi_u\}$. Note that any block-lower triangular matrix $\Phi_x$ that satisfies \eqref{eq:SLC}, has identity matrices on its block diagonal and is therefore invertible. Finally, $\rank (\Phi_u\Phi_x^{-1})=\rank \Phi_u$, which concludes the proof.
\end{proof}
Overall, Theorem~\ref{thm:min_rank_SLS} reduces Problem~\ref{prob} to a rank minimization over a convex set, followed by the computation of a causal factorization.

\section{FROM FACTORIZATION TO TRUNCATION}

While Theorem~\ref{thm:min_rank_SLS} theoretically gives a method to solve Problem~\ref{prob}, it presents some numerical challenges. Indeed, minimizing the rank is NP-hard and one typically relies on heuristics such as nuclear norm relaxation (which is the tightest convex relaxation of the rank \cite[Theorem~1]{fazel2001rank}). This leads to a numerical solution $\{\Phi_x^{\text{num}},\Phi_u^{\text{num}}\}$ for which $\Phi_u^{\text{num}}$ is only \emph{numerically} low rank, i.e., it has many small (but non-zero) singular values. These matrices lead to a controller $\K^{\text{num}}\coloneqq \Phi_u^{\text{num}}(\Phi_x^{\text{num}})^{-1}$ that is -- at best -- numerically low rank as well.

To obtain a controller that requires a number of transmissions equal to the numerical rank, and not the exact rank, some approximation is needed. A usual way to find a low-rank approximation of a matrix is via truncated singular value decomposition (SVD). However a truncated SVD of $\K^{\text{num}}$ would lead to a matrix that is not block-lower triangular and for which no causal factorization exists.

This section aims to solve this issue. In Subsection~\ref{sec:approx_causal_facto}, we introduce the notion of \emph{approximate causal factorization} for which $\K^{\text{num}}\approx\D\E$. Allowing for a factorization error makes it possible to reduce the band below the (exact) rank of $\K^{\text{num}}$. Then, in Subsection~\ref{sec:robustness} we derive an upper bound on the degradation of the L2 gain in terms of the factorization error.

\subsection{Approximate causal factorization}\label{sec:approx_causal_facto}

In this subsection, we define formally the notion of approximate causal factorization (or $\epsilon$-causal factorization) and we provide an algorithm to compute it.

\begin{definition}[$\epsilon$-causal factorization]
Let $\X\in\R^{(T+1)n_u\times (T+1)n_x}$ be a $(n_u,n_x)$-block-lower triangular matrix. A pair of matrices $(\D,\E)\in\R^{(T+1)n_u\times r}\times \R^{r \times (T+1)n_x}$ is an \emph{$\epsilon$-causal factorization of $\X$ with band $r$} if they satisfy $\|\X - \D\E\|\leq\epsilon$ and Constraint~\ref{constraint:causality} (Causality).
\end{definition}
Note that a $0$-causal factorization is a causal factorization as defined in Definition~\ref{def:causal_facto}.

As stated in Theorem~\ref{thm:causalFactorization} below, Algorithm~\ref{algo:causalFactorization} computes an $\epsilon$-causal factorization.\footnote{The following convention is used: for two ``empty'' matrices $\mathbf{A}\in\R^{m\times 0}$ and $\mathbf{B}\in\R^{0\times n}$, $\mathbf{AB}\coloneqq0_{m\times n}$.} The algorithm works as follows: the encoder matrix $\E$ is constructed row by row according to the following rule: a row of $\X$ is added to $\E$ if and only if it can not be approximated by a linear combination of the rows already in $\E$. When a row is not added, the coefficients of the linear combination are added to the decoder $\D$.

\begin{algorithm}
\small
\caption{Approximate causal factorization} \label{algo:causalFactorization}
\begin{algorithmic}[1]
\Require $\X\in\R^{m\times n}$ block-lower triangular and $\epsilon\geq0$
\State $\D\coloneqq 0_{0\times0}$, $\E\coloneqq 0_{0\times n}$, $r\coloneqq0$ \Comment{$r$ is the band}
\For{$l=1,\dots,m$}
    \If{$\underset{d\in\R^{1\times r}}{\min}\left\|\X_{1:l,:} - \begin{bmatrix}\D\\ d\end{bmatrix}\E\right\|\leq\epsilon$} \label{algo:line:if}
        \State $\D\coloneqq \begin{bmatrix} \D \\ d\end{bmatrix}$ \label{algo:line:in-if}
    \Else \label{algo:line:else}
        \State $\E\coloneqq \begin{bmatrix} \E \\ \X_{l,:}\end{bmatrix}$, $\D\coloneqq \begin{bmatrix} \D & 0 \\ 0 & 1 \end{bmatrix}$, $r\coloneqq r+1$ \label{algo:line:in-else}
    \EndIf \Comment{$\D\in\R^{l\times r}$ and $\E\in\R^{r\times n}$}
\EndFor
\State \Return $(\D,\E)$
\end{algorithmic}
\end{algorithm}

\begin{theorem}\label{thm:causalFactorization}
Algorithm~\ref{algo:causalFactorization} returns an $\epsilon$-causal factorization. If $\epsilon=0$, the band of the returned causal factorization is $\rank\X$. If $\epsilon\geq\|\X\|$, the band of the returned causal factorization is zero.
\end{theorem}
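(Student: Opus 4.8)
The plan is to prove all three assertions simultaneously by carrying a single loop invariant through the $m$ iterations of Algorithm~\ref{algo:causalFactorization}: after row $l$ has been processed, the current iterates $(\D,\E)$, of sizes $l\times r$ and $r\times n$, satisfy $\|\X_{1:l,:}-\D\E\|\leq\epsilon$. Evaluating this at termination ($l=m$) immediately gives $\|\X-\D\E\|\leq\epsilon$, which is the approximation half of an $\epsilon$-causal factorization. I would prove the invariant by induction on $l$, with the empty base case $l=0$ giving residual $0$ via the stated empty-matrix convention $\D\E=0_{0\times n}$.

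For the inductive step I would treat the two branches separately. In the ``if'' branch, $\E$ is unchanged and the minimizing $d$ is appended to $\D$, so the new residual norm equals the minimized objective of line~\ref{algo:line:if}, which is $\leq\epsilon$ by the branch condition, and the invariant is immediate. The ``else'' branch is the more delicate one: appending $\X_{l,:}$ to $\E$ and the indicated block to $\D$ yields $\D\E=\left[\begin{smallmatrix}\D^{\mathrm{old}}\E^{\mathrm{old}}\\ \X_{l,:}\end{smallmatrix}\right]$, so the new residual is the previous residual with an appended zero row. Here I would invoke the fact that stacking a zero row leaves the spectral norm unchanged (the Gram matrix, hence the singular values, is unchanged), so the invariant is inherited from the previous iteration.

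Next I would verify Constraint~\ref{constraint:causality} and the two band claims. For causality I would exploit the $(n_u,n_x)$-block structure: row $l$ sits in block-row $t=\lfloor(l-1)/n_u\rfloor$, and since $\X$ is block-lower-triangular this row already vanishes in block-columns $\tau>t$; assigning $t_k:=t$ to a message created at row $l$ makes the encoder condition $\E_{k,\tau n_x+j}=0$ for $\tau>t_k$ hold, while the newly created column of $\D$ is zero in all rows strictly above $l$, hence in all block-rows $t'<t_k$, giving the decoder condition $\D_{t'n_u+i,k}=0$; processing rows in increasing order makes the $t_k$ nondecreasing and contained in $\{0,\dots,T\}$. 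For $\epsilon=0$ the invariant forces exact equality, so the ``else'' branch fires precisely when $\X_{l,:}$ is not in the row span of the current $\E$; the selected rows are therefore a maximal independent subset of the rows of $\X$, i.e.\ a basis of its row space, whence $r=\rank\X$. For $\epsilon\geq\|\X\|$ I would note that while $r=0$ the test on line~\ref{algo:line:if} reduces to $\|\X_{1:l,:}\|\leq\epsilon$, which holds because a row-submatrix satisfies $\|\X_{1:l,:}\|\leq\|\X\|\leq\epsilon$; thus the ``else'' branch never fires and $r=0$.

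The main obstacle is the causality bookkeeping: one must carefully translate the flat row index $l$ of the generic algorithm into the block-row index that plays the role of a transmission time $t_k$, and then match both the encoder zero-pattern (inherited from block-lower-triangularity of $\X$) and the decoder zero-pattern (inherited from the identity-plus-zeros update of $\D$) against the quantifiers in Constraint~\ref{constraint:causality}. The spectral-norm argument in the ``else'' branch, although short, is the other place where a clean observation---invariance of the norm under appending zero rows---does the real work.
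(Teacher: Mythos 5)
Your proposal is correct and follows essentially the same route as the paper's own proof: an inductive loop invariant $\|\X_{1:l,:}-\D\E\|\leq\epsilon$ (with the ``else'' branch handled by noting the residual only gains a zero row), causality via the block-row index $t_k$ of each appended row together with the block-lower-triangularity of $\X$ and the identity-column structure of $\D$, the row-span characterization of the test when $\epsilon=0$, and the submatrix norm bound $\|\X_{1:l,:}\|\leq\|\X\|$ when $\epsilon\geq\|\X\|$. No substantive differences.
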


\begin{proof} 
Let $(\D,\E)$ be the pair returned by Algorithm~\ref{algo:causalFactorization} and let $r$ be its band. First we prove that $(\D,\E)$ satisfies Constraint~\ref{constraint:causality}. Let $1\leq l_1<l_2<\dots<l_r\leq m$ be the values of $l$ such that the \textbf{else} statement on line~\ref{algo:line:else} is executed. For $k\in\{1,\dots,r\}$, let $t_k$ be such that $l_k=t_k n_u+i_k$ for some $i_k\in\{1,\dots,n_u\}$.

Then, for any $\tau>t_k$ and any $j\in\{1,\dots, n_x\}$, thanks to line~\ref{algo:line:in-else} and the block-lower triangularity of $\X$, we have $\E_{k,\tau n_x+j}=\X_{l_k,\tau n_x+j}=\X_{t_k n_u+i_k,\tau n_x+j}=0$. This proves the causality constraint on $\E$. Similarly, from line~\ref{algo:line:in-else}, it follows that $\D_{1:l_k,k}=\begin{bmatrix} 0_{(l_k-1)\times1} \\ 1 \end{bmatrix}$. Then, for any $t<t_k$ and $i\in\{1,\dots,n_u\}$, we have $t n_u+i<t_k n_u+i_k=l_k$ and $\D_{t n_u+i,k}=0$. This proves the causality constraint on $\D$.

Let $\D^{(l)}$ and $\E^{(l)}$ be the values of $\D$ and $\E$ at the end of the $l$-th execution of the \textbf{for} loop. Note that, at the end of algorithm, we have $\D = \D^{(m)}$ and $\E = \E^{(m)}$. To prove that $\|\X-\D\E\| \leq\epsilon$, we show that
\begin{align}\label{eq:proof:recursive}
\|\X_{1:l}-\D^{(l)}\E^{(l)}\|\leq\epsilon,
\end{align}
for all $l\in\{1,\dots,m\}$. We proceed inductively on $l$. For $l=1$, the \textbf{if} condition on line~\ref{algo:line:if} can be rewritten $\|\X_{1,:}\|\leq \epsilon$. If this condition holds, then $\D^{(1)}=0_{1\times0}$ and $\E^{(1)}=0_{0\times n}$ and \eqref{eq:proof:recursive} holds. If the \textbf{if} condition does not hold, then $\E^{(1)}=\X_{1,:}$ and $\D^{(1)}=1$ and \eqref{eq:proof:recursive} holds. For $l>1$, if the \textbf{if} condition holds, it directly implies \eqref{eq:proof:recursive}; if the \textbf{if} condition does not hold, then using the inductive assumption,
\begin{small}
\begin{align*}
\|\X_{1:l,:}-\D^{(l)}\E^{(l)}\|&=\left\| \begin{bmatrix} \X_{1:l-1,:} \\ \X_{l,:}\end{bmatrix} - \begin{bmatrix} \D^{(l-1)} & 0 \\ 0 & 1 \end{bmatrix} \begin{bmatrix} \E^{(l-1)} \\ \X_{l,:}\end{bmatrix} \right\|\\
&=\left\|\begin{bmatrix} \X_{1:l-1}-\D^{(l-1)}\E^{(l-1)} \\ 0 \end{bmatrix}\right\| \\
&=\left\| \X_{1:l-1}-\D^{(l-1)}\E^{(l-1)} \right\| \leq\epsilon.
\end{align*}
\end{small}
This concludes the proof of \eqref{eq:proof:recursive} and the proof of $\|\X-\D\E\|\leq\epsilon$.

Let us prove that when $\epsilon=0$, $r=\rank \X$. In that case the \textbf{if} condition in line~\ref{algo:line:if} is equivalent to the existence of a vector $d$ such that $\begin{bmatrix}\X_{1:l-1,:} \\ \X_{l,:}\end{bmatrix}=\begin{bmatrix}\D^{(l-1)} \\ d\end{bmatrix}\E^{(l-1)}$. Since $\X_{1:l-1,:}=\D^{(l-1)}\E^{(l-1)}$ from \eqref{eq:proof:recursive}, the \textbf{if} condition is equivalent to $\X_{l,:}\in\operatorname{range}~\E^{(l-1)}$. Consequently, a row $\X_{l,:}$ is added to $\E$ if and only if it is independent of the ones already added. It follows that $\E$ has $\rank \X$ rows and $r=\rank \X$.

Finally, if $\|\X\|\leq\epsilon$, since for all $l$, $\|\X_{1:l,:}\|\leq\|\X\|\leq\epsilon$, the \textbf{if} condition on line~\ref{algo:line:if} always holds with $\E\coloneqq 0_{0\times n}$. In this case, it follows that the final value of $r$ is zero.
\end{proof}

Note that the approximate causal factorization returned by Algorithm~\ref{algo:causalFactorization} is not necessarily optimal when $0<\epsilon<\|\mathbf{X}\|$, in the sense that there may exist another $\epsilon$-causal factorization with a smaller band than the one returned. However, minimizing the band is computationally hard since it involves a combinatorial problem in selecting the transmission times $t_k$. We also note that the optimal band is lower bounded by $i^*$, where $i^*$ is the index of the smallest singular value of $\X$ that is strictly greater than $\epsilon$, which is essentially the minimum rank SVD approximation that ignores causality. This fact can be used to get certificates of optimality for Algorithm~\ref{algo:causalFactorization}.

\subsection{Robustness to factorization error}\label{sec:robustness}

In this subsection, we derive an upper bound on the degradation of the L2 gain with respect to the factorization error. To this end, we first prove that an exact causal factorization of a $\K$ can be obtained from an approximate causal factorization of $\Phi_u$. Then, we bound the degradation of the L2 gain as a function of the factorization error on $\Phi_u$.

\begin{lemma}\label{thm:causalFactorization:SLS}
Let the pair $(\D_{\Phi_u},\E_{\Phi_u})$ be an $\epsilon$-causal factorization of a matrix $\Phi_u$, and let the matrix $\Phi_x$ be a $(n_x,n_x)$-block-lower triangular and invertible matrix. Then the pair $(\D_{\Phi_u},\E_{\Phi_u}\Phi_x^{-1})$ is an exact causal factorization of the matrix $\K\coloneqq\D_{\Phi_u}\E_{\Phi_u}\Phi_x^{-1}$ with same band.
\end{lemma}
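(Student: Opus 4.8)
The plan is to verify directly the two defining properties of an exact causal factorization for the candidate pair $(\D_{\Phi_u},\E_{\Phi_u}\Phi_x^{-1})$: the exact product condition $\K=\D_{\Phi_u}(\E_{\Phi_u}\Phi_x^{-1})$ and Constraint~\ref{constraint:causality} (Causality). The product condition holds immediately, since $\K$ is \emph{defined} as $\D_{\Phi_u}\E_{\Phi_u}\Phi_x^{-1}$; crucially, the $\epsilon$ factorization error on $\Phi_u$ plays no role here, because we are asked to factorize $\K$ exactly rather than $\Phi_u$. The band is the inner dimension $r$, and since $\E_{\Phi_u}\Phi_x^{-1}$ has the same number of rows as $\E_{\Phi_u}$ (and $\D_{\Phi_u}$ is untouched), the band is preserved. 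Thus everything reduces to checking the causality constraint.

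Because the decoder $\D_{\Phi_u}$ is carried over unchanged, the decoder half of Constraint~\ref{constraint:causality} holds verbatim, with the same transmission times $\{t_k\}_{k=1}^r$ that witness the $\epsilon$-causal factorization of $\Phi_u$. The real content is therefore the encoder half: I must show that the staircase sparsity of $\E_{\Phi_u}$ --- row $k$ having nonzero blocks only in block-columns $0,\dots,t_k$ --- survives right-multiplication by $\Phi_x^{-1}$, relative to the \emph{same} times $\{t_k\}$.

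The structural fact I would invoke is that the inverse of an invertible $(n_x,n_x)$-block-lower-triangular matrix is again $(n_x,n_x)$-block-lower-triangular; this is standard and follows by block forward substitution using invertibility of the square diagonal blocks. Writing the product in block form, the sub-row of $\E_{\Phi_u}\Phi_x^{-1}$ in row $k$, block-column $\tau$ equals $\sum_{s\geq\tau}(\E_{\Phi_u})_{k,s}\,(\Phi_x^{-1})_{s,\tau}$, where $(\E_{\Phi_u})_{k,s}$ denotes the $1\times n_x$ block in row $k$, block-column $s$, and $(\Phi_x^{-1})_{s,\tau}$ the $n_x\times n_x$ block in block-row $s$, block-column $\tau$; the summation range $s\geq\tau$ is forced precisely by the block-lower-triangularity of $\Phi_x^{-1}$. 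For any $\tau>t_k$, every summation index obeys $s\geq\tau>t_k$, so each factor $(\E_{\Phi_u})_{k,s}$ vanishes by the encoder causality of the original factorization, and the entire block is zero. This is exactly the encoder half of Constraint~\ref{constraint:causality} for $\E_{\Phi_u}\Phi_x^{-1}$, completing the verification.

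I expect the last index-chasing step to be the only genuine obstacle: one must confirm that right-multiplication does not spread support to the right of $t_k$, which is why the block-lower-triangularity of $\Phi_x^{-1}$ (and not merely of $\Phi_x$) is the indispensable ingredient. Everything else --- exactness of the product and invariance of the band --- is essentially definitional.
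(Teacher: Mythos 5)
Your proposal is correct and follows essentially the same route as the paper's proof: exactness and band preservation are definitional, decoder causality carries over verbatim, and encoder causality is checked by expanding the block product and using that $\Phi_x^{-1}$ is again $(n_x,n_x)$-block-lower triangular so that every surviving summand is killed by the original encoder sparsity. The only cosmetic difference is that you restrict the summation range via the triangularity of $\Phi_x^{-1}$ first and then invoke causality, whereas the paper splits the full sum into the two index regimes and annihilates each separately.
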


\begin{proof}
First, note that the pair $(\D_\K,\E_\K) \coloneqq (\D_{\Phi_u}, \E_{\Phi_u}\Phi_x^{-1})$ is a factorization of $\K$, i.e., $\K=\D_\K\E_\K$. Then, we need to show that this pair satisfies the causality constraints. Since $(\D_{\Phi_u},\E_{\Phi_u})$ is an $\epsilon$-causal factorization, it satisfies the causality constraint for some $\{t_k\}_{k=1}^r$. Since $\D_\K=\D_{\Phi_u}$, $\D_\K$ satisfies the causality constraint as well. Let $k\in\{1,\dots,r\}$, $\tau>t_k$ and $j\in\{1,\dots,n_x\}$. To show that $\E_\K$ satisfies the causality constraint, we need to show $[\E_\K]_{k,\tau n_x+j}=0$. One can write $[\E_\K]_{k,\tau n_x+j}=\sum_{t=0}^T\sum_{i=1}^{n_x}[\E_{\Phi_u}]_{k,tn_x+i} [\Phi_x^{-1}]_{tn_x+i,\tau n_x+j}$. If $t>t_k$, then $[\E_{\Phi_u}]_{k,tn_x+i}=0$ for all $i$ thanks to causality. If $t\leq t_k$, then $t<\tau$ and $[\Phi_x^{-1}]_{tn_x+i,\tau n_x+j}=0$ for all $i$ because $\Phi_x^{-1}$ is $(n_x,n_x)$-block-lower triangular (since $\Phi_x$ is). It follows that all terms in the sum are zero and the proof is complete.
\end{proof}

The previous result allows to factorize $\Phi_u$ instead of $\K$. By doing so, we can use the robustness of SLS \cite[Theorem 2.2]{anderson2019system} to characterize the degradation of the L2 gain in terms of the factorization error. Note that \cite[Theorem 2.2]{anderson2019system} is usually used to study the degradation of performance with respect to an uncertainty on the system dynamics (namely, matrices $A$ and $B$). In what follows, we show that this theorem can be extended to bound the degradation of the L2 gain \emph{with respect to a perturbation of a feasible solution $\Phi_u$.} This is our second main result.

\begin{theorem}\label{thm:bound}
Let the matrices $\{\Phi_x,\Phi_u\}$ be $(n_x,n_x)$- and $(n_u,n_x)$-block-lower triangular (resp.) satisfying \eqref{eq:SLC} and
\begin{align}\label{eq:l2:robust}
\left\| \begin{bmatrix} \mathcal{Q}^{1/2} & 0 \\ 0 & \mathcal{R}^{1/2} \end{bmatrix} \begin{bmatrix}\Phi_x \\ \Phi_u\end{bmatrix}\right\|\leq \gamma_\epsilon\coloneqq \gamma/\beta_\epsilon-\alpha_\epsilon,
\end{align}
with $\alpha_\epsilon\coloneqq\|\mathcal{R}^{1/2}\|\epsilon$, and $\beta_\epsilon\coloneqq\|\mathcal{D}\|\sum_{t=0}^T\left(\|Z\mathcal{B}\|\epsilon\right)^t$ for some non-negative $\epsilon$ and $\gamma$. Let the pair $(\D_{\Phi_u},\E_{\Phi_u})$ be an $\epsilon$-causal factorization of $\Phi_u$, then the pair $(\D_{\Phi_u},\E_{\Phi_u}\Phi_x^{-1})$ is an (exact) causal factorization of a matrix $\K\coloneqq \D_{\Phi_u}\E_{\Phi_u}\Phi_x^{-1} $ that reaches an L2 gain not greater than $\gamma$.
\end{theorem}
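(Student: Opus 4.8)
The plan is to treat the claim as two separate assertions. That $(\D_{\Phi_u},\E_{\Phi_u}\Phi_x^{-1})$ is an exact causal factorization of $\K$ with the same band is precisely the content of Lemma~\ref{thm:causalFactorization:SLS}, which applies because $\Phi_x$ is assumed $(n_x,n_x)$-block-lower triangular and invertible. So the substance of the proof is the L2-gain bound: I must show that the controller $\K=\hat\Phi_u\Phi_x^{-1}$, with $\hat\Phi_u\coloneqq\D_{\Phi_u}\E_{\Phi_u}$, drives the closed loop to an L2 gain at most $\gamma$. The starting point is that, by definition of an $\epsilon$-causal factorization, $\|\Phi_u-\hat\Phi_u\|\leq\epsilon$; setting $\Delta\coloneqq\hat\Phi_u-\Phi_u$ we have $\|\Delta\|\leq\epsilon$, and $\hat\Phi_u$ inherits block-lower triangularity from the causality constraint (the product of a causal factorization is block-lower triangular), so $\Delta$ is block-lower triangular as well.

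The second step is to regard $(\Phi_x,\hat\Phi_u)$ as an \emph{approximate} SLS solution and invoke the robustness of SLS \cite[Theorem 2.2]{anderson2019system}. Inserting $(\Phi_x,\hat\Phi_u)$ into the left side of \eqref{eq:SLC} and using that $(\Phi_x,\Phi_u)$ satisfies \eqref{eq:SLC} exactly yields the residual
\begin{align*}
\begin{bmatrix}I-Z\mathcal{A} & -Z\mathcal{B}\end{bmatrix}\begin{bmatrix}\Phi_x \\ \hat\Phi_u\end{bmatrix} = I - Z\mathcal{B}\Delta.
\end{align*}
By the robustness theorem, the controller $\K=\hat\Phi_u\Phi_x^{-1}$ then achieves the actual system response $\begin{bmatrix}\Phi_x \\ \hat\Phi_u\end{bmatrix}(I - Z\mathcal{B}\Delta)^{-1}$, provided $I-Z\mathcal{B}\Delta$ is invertible. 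Invertibility is automatic: since $Z$ is strictly block-lower triangular and $\Delta$ block-lower triangular, $Z\mathcal{B}\Delta$ is strictly block-lower triangular, hence nilpotent with $(Z\mathcal{B}\Delta)^{T+1}=0$, so $I-Z\mathcal{B}\Delta$ has the finite Neumann inverse $\sum_{t=0}^{T}(Z\mathcal{B}\Delta)^{t}$. This nilpotency is exactly what makes the finite sum defining $\beta_\epsilon$ the right object and removes any need for a small-$\epsilon$ assumption.

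The final step bounds the L2 gain \eqref{eq:l2:sls} of this actual response by submultiplicativity. Writing $M\coloneqq\mathrm{blkdiag}(\mathcal{Q}^{1/2},\mathcal{R}^{1/2})$, the gain equals $\big\|M\begin{bmatrix}\Phi_x \\ \hat\Phi_u\end{bmatrix}(I-Z\mathcal{B}\Delta)^{-1}\mathcal{D}\big\|$, which I split as the product of $\big\|M\begin{bmatrix}\Phi_x \\ \hat\Phi_u\end{bmatrix}\big\|$, $\|(I-Z\mathcal{B}\Delta)^{-1}\|$, and $\|\mathcal{D}\|$. For the first factor, $M\begin{bmatrix}\Phi_x \\ \hat\Phi_u\end{bmatrix}=M\begin{bmatrix}\Phi_x \\ \Phi_u\end{bmatrix}+\begin{bmatrix}0 \\ \mathcal{R}^{1/2}\Delta\end{bmatrix}$, so the triangle inequality with \eqref{eq:l2:robust} gives $\gamma_\epsilon+\|\mathcal{R}^{1/2}\|\epsilon=\gamma_\epsilon+\alpha_\epsilon$. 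For the other two factors, the Neumann bound gives $\|(I-Z\mathcal{B}\Delta)^{-1}\|\leq\sum_{t=0}^{T}(\|Z\mathcal{B}\|\epsilon)^{t}$, and multiplying by $\|\mathcal{D}\|$ recovers exactly $\beta_\epsilon$. Combining, the L2 gain is at most $(\gamma_\epsilon+\alpha_\epsilon)\beta_\epsilon=\gamma$, using the definition $\gamma_\epsilon=\gamma/\beta_\epsilon-\alpha_\epsilon$. The main obstacle I anticipate is applying the SLS robustness theorem in this nonstandard direction—perturbing the synthesized response $\Phi_u$ rather than the plant matrices $A,B$—and the careful bookkeeping that keeps $\mathcal{D}$ on the correct side so that the norm factors assemble precisely into $\alpha_\epsilon$ and $\beta_\epsilon$.
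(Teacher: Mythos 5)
Your proposal is correct and follows essentially the same route as the paper's proof: define $\Delta$ as the factorization error, compute the residual $I-Z\mathcal{B}\Delta$ in \eqref{eq:SLC}, invoke the SLS robustness theorem, expand $(I-Z\mathcal{B}\Delta)^{-1}$ as a finite Neumann sum by nilpotency, and bound the gain via the triangle inequality and submultiplicativity to obtain $(\gamma_\epsilon+\alpha_\epsilon)\beta_\epsilon=\gamma$, with Lemma~\ref{thm:causalFactorization:SLS} handling the exact-factorization claim. The only cosmetic difference is that you split $\|(I-Z\mathcal{B}\Delta)^{-1}\|$ and $\|\mathcal{D}\|$ as separate factors where the paper bounds their product jointly; the resulting bound is identical.
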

\begin{proof}
Let $\tilde{\Phi}_u\coloneqq \D_{\Phi_u}\E_{\Phi_u}$. By definition of $\epsilon$-causal factorization, one can write $\tilde{\Phi}_u=\Phi_u+\Delta$ with $\Delta$ being $(n_u,n_x)$-block-lower triangular and such that $\|\Delta\|\leq\epsilon$. Then, since $\{\Phi_x,\Phi_u\}$ satisfy \eqref{eq:SLC}, we can write $\begin{bmatrix}I-Z\mathcal{A} & -Z\mathcal{B}\end{bmatrix}\begin{bmatrix}\Phi_x \\ \tilde{\Phi}_u\end{bmatrix} = I - Z\mathcal{B}\Delta$. Then, it follows from \cite[Theorem 2.2]{anderson2019system} (robustness of state feedback SLS), that the closed loop response of the system using the controller $\tilde{\K}=\tilde{\Phi}_u\Phi_x^{-1}$ leads to the closed loop response
$$
\begin{bmatrix} \mathbf{x} \\ \mathbf{u} \end{bmatrix} = \begin{bmatrix}\Phi_x \\ \tilde{\Phi}_u\end{bmatrix}(I-Z\mathcal{B}\Delta)^{-1}\mathcal{D}\mathbf{w}.
$$
Consequently, it induces the L2 gain
\begin{align*}
&\left\| \begin{bmatrix} \mathcal{Q}^{1/2} & 0 \\ 0 & \mathcal{R}^{1/2} \end{bmatrix} \begin{bmatrix}\Phi_x \\ \Phi_u+\Delta\end{bmatrix}(I-Z\mathcal{B}\Delta)^{-1}\mathcal{D} \right\|.
\end{align*}
Since $Z\mathcal{B}\Delta$ is block-lower triangular, we have $(I-Z\mathcal{B}\Delta)^{-1}=\sum_{t=0}^{T} (Z\mathcal{B}\Delta)^t$. Using triangular inequality and submultiplicativity of the induced $2$-norm, the L2 gain can be bounded as
\begin{align*}
& \left\| \left( \begin{bmatrix} \mathcal{Q}^{1/2} & 0 \\ 0 & \mathcal{R}^{1/2} \end{bmatrix} \begin{bmatrix}\Phi_x \\ \Phi_u\end{bmatrix} + \begin{bmatrix}0\\ \mathcal{R}^{1/2}\Delta \end{bmatrix} \right)\sum_{t=0}^{T} (Z\mathcal{B}\Delta)^t\mathcal{D} \right\| \\
~
\leq& \left(\left\| \begin{bmatrix} \mathcal{Q}^{1/2} & 0 \\ 0 & \mathcal{R}^{1/2} \end{bmatrix} \begin{bmatrix}\Phi_x \\ \Phi_u\end{bmatrix}\right\| +\left\|\begin{bmatrix}0\\ \mathcal{R}^{1/2}\Delta \end{bmatrix} \right\|\right)\\
~
&\hspace{5cm}\times \left\| \sum_{t=0}^{T}  (Z\mathcal{B}\Delta)^t\mathcal{D}\right\| \\
~
\leq& ( \gamma_\epsilon + \|\mathcal{R}^{1/2}\|\|\Delta\|) \sum_{t=0}^{T} \left(\| Z\mathcal{B}\|\|\Delta\|\right)^t\left\|\mathcal{D}\right\| \leq \gamma,
\end{align*}
where the last inequality uses $\|\Delta\|\leq\epsilon$.

Finally, it follows from Lemma~\ref{thm:causalFactorization:SLS} that the pair $(\D_{\Phi_u},\E_{\Phi_u}\Phi_x^{-1})$ is an (exact) causal factorization of $\K$.
\end{proof}

Overall, Theorem~\ref{thm:bound} allows to approximately solve Problem~\ref{prob} via the following procedure:
\begin{enumerate}
\item Fix an $\epsilon\geq0$.
\item Find block-lower triangular matrices $\{\Phi_x,\Phi_u\}$ that approximately solve 
\begin{align} \label{eq:prob:epsilon}
\min_{\Phi_x,\Phi_u} \rank \Phi_u \text{ s.t. \eqref{eq:SLC}, \eqref{eq:l2:robust}},
\end{align}
for example via nuclear norm minimization.
\item Compute an $\epsilon$-causal factorization $(\D_{\Phi_u},\E_{\Phi_u})$ of $\Phi_u$ using Algorithm~\ref{algo:causalFactorization}.
\item Implement the controller with the pair $(\D_\K,\E_\K)\coloneqq(\D_{\Phi_u},\E_{\Phi_u}\Phi_x^{-1})$ which reaches an L2 gain not greater than $\gamma$.
\end{enumerate}

If $\epsilon$ is chosen too large, the corresponding $\gamma_\epsilon$ is too small and can lead to infeasibility when solving problem~\eqref{eq:prob:epsilon}. On the opposite, if $\epsilon$ is too small, then the $\epsilon$-causal factorization can give a band higher than the numerical rank, leading to extra transmissions. However, in the next section, we show that for a large range of $\epsilon$, the procedure above achieves good performance.

\section{NUMERICAL EXPERIMENTS}

To illustrate our method,\footnote{A \textsc{Python} code that implements our method and generates the figures is available at: \url{https://github.com/aaspeel/Min-L2-Consistent}. The reported computation times are obtained using a laptop with an Apple M2 chip and 16 GB of RAM.} we consider a two-dimensional double integrator dynamics $\ddot{p}_x=u_x$ and $\ddot{p}_y=u_y$ discretized using forward Euler method with time step $0.1$. This gives the state $x_t=\begin{bmatrix}p_x & \dot{p}_x & p_y & \dot{p}_y\end{bmatrix}^\top$. A noise $w_t$ is then added to the dynamics with $D=I$. Matrices $Q$ and $R$ are chosen to be the identity as well, and $T=20$ time steps are considered. Except stated otherwise, $\gamma=9.344$ (which is close to the infimum L2 gain that can be reached with periodic measurements of period 3), and our method uses $\epsilon=10^{-8}$. To (approximately) minimize the rank, we do 8 iterations of the reweighted nuclear norm heuristic described in~\cite[Section~III]{mohan2010reweighted} with regularization parameter $\delta=0.01$. We compare our method with the one presented in \cite{balaghi20192}, referred to as the \emph{minimax} method here. Note that the minimax method does not use an encoder and communicates the state $x_t\in\R^4$, which requires $n_x=4$ messages.

Fig.~\ref{fig:communication_times} compares the transmission times obtained with our method with the ones obtained using the minimax method and the ones of period 3 (this is the largest period that leads to an L2 gain smaller than $\gamma$). Our method gives a controller that requires 8 transmissions, while the minimax method requires 16. However, note that our method is significantly slower: it synthesizes a controller in 376 seconds (375 seconds for the nuclear norm minimization and 1 second for the approximate causal factorization) while the minimax method runs in less than 1 second. Note that these computations are performed offline.

\begin{figure}[!ht]
    \centering
    \includegraphics[width=.9\columnwidth]{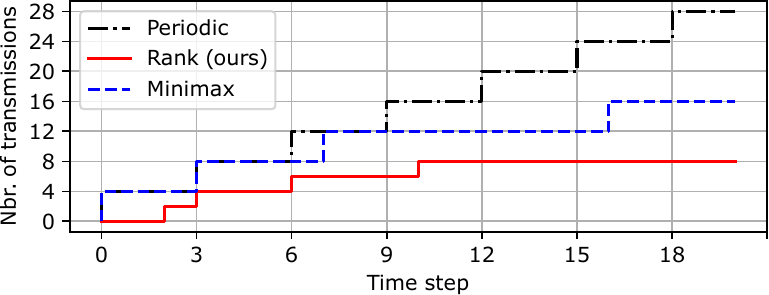}
    \vspace{-0cm}
    \caption{Transmission times obtained with our method and with the minimax method. Periodic state transmission is also scheduled for the period 3.}
    \vspace{-.1cm}
    \label{fig:communication_times}
\end{figure}

Fig.~\ref{fig:causal_factorization} shows the sparsity pattern of the matrix $\K$ and its (exact) causal factorization $(\D,\E)$. The staircase structure of matrices $\D$ and $\E$ comes from the causality Constraint~\ref{constraint:causality}. The number of transmissions obtained for different values of the L2 bound $\gamma$ is illustrated in Fig.~\ref{fig:communications_vs_gamma}. Our method consistently leads to fewer transmissions than the minimax method.

\begin{figure}[!ht]
    \centering
    \includegraphics[width=\columnwidth]{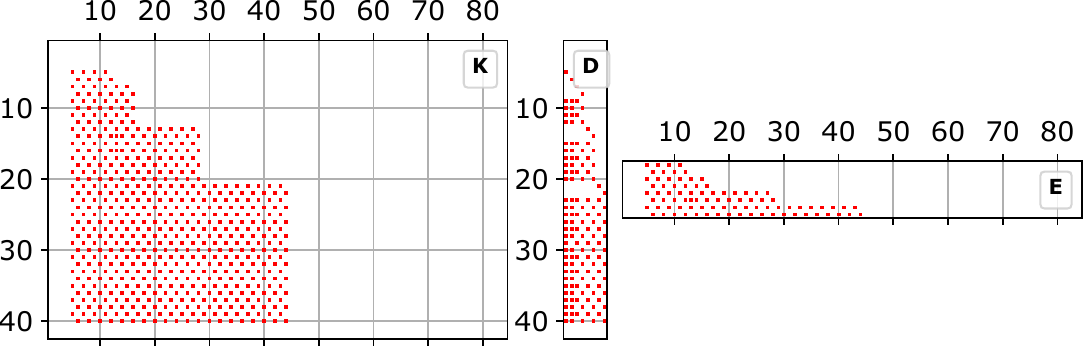}
    \vspace{-0.3cm}
    \caption{Sparsity pattern of $\K$ and its causal factorization $(\D,\E)$.}
    \vspace{-.1cm}
    \label{fig:causal_factorization}
\end{figure}

\begin{figure}[!ht]
    \centering
    \includegraphics[width=.9\columnwidth]{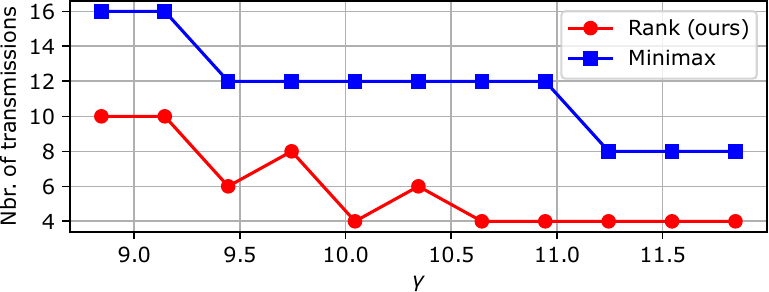}
    \vspace{-0cm}
    \caption{Number of transmissions with respect to the bound $\gamma$ on the L2 gain.}
    \vspace{-.1cm}
    \label{fig:communications_vs_gamma}
\end{figure}

Finally, we analyze the sensitivity of our method with respect to $\epsilon$. Our method leads to 8 transmissions for all $\epsilon\in\{10^{-9}, 10^{-7}, 10^{-5}, 10^{-3} \}$, and to 22 transmissions for $\epsilon=10^{-12}$. This shows that our method gives good performance for a large range of $\epsilon$. This is of practical interest since it means that $\epsilon$ does not require to be fine tuned.

\section{CONCLUSIONS AND FUTURE WORKS}

In this work, we addressed the problem of designing a controller that requires a minimum number of sensor-to-actuator transmissions while satisfying a bound on the L2 gain. To this end, we rely on causal factorization and system level synthesis to reduce the problem to a rank minimization over a convex set. To anticipate the fact that optimization methods yield matrices that are only numerically low rank, we introduce the notion of approximate causal factorization and an algorithm to compute it. Finally, the degradation in L2 gain caused by the factorization error is bounded.

Our future works will consider an event-triggered extension suitable to the infinite horizon setting.

\bibliographystyle{IEEEtran}
\bibliography{bibliography_short}

\end{document}